\definecolor{darkblue}{rgb}{0,0,.5}
\theoremstyle{plain}
\newtheorem{theorem}{Theorem}[section]
\newtheorem{lemma}[theorem]{Lemma}
\newtheorem{proposition}[theorem]{Proposition}
\theoremstyle{definition}
\newtheorem{definition}[theorem]{Definition}
\newcommand{\Section}[1]{\section{#1} \setcounter{equation}{0}}
\def\d{\textup{div}}
\def\R{\mathbb{R}}
\def\F{\mathcal{F}}
\newcommand{\up}{\upshape}
\def\vv<#1>{\langle#1\rangle}
\newcommand{\tr}{\mbox{$\textup{Tr}$}}
\newcommand{\ev}{\mbox{$\text{\up{ev}}$}}
\newcommand{\dd}[2]{\mbox{$\frac{\partial #2}{\partial #1}$}}
\newcommand{\ddo}{\mbox{$\dd{t}{}|_{0}$}}
\providecommand{\del}{\partial}
\newcommand{\Om}{\Omega}
\newcommand{\lam}{\lambda}
\newcommand{\by}[2]{\mbox{$\frac{#1}{#2}$}}
\newcommand{\cinf}{\mbox{$C^{\infty}$}}
\newcommand{\X}{\mathfrak{X}}
\newcommand{\gu}{\mathfrak{g}}
\newcommand{\revise}[1]{#1}%{{\color{blue} #1}}
\newcommand{\retwo}[1]{#1}%{{\color{blue} #1}}
\title[]{%
%\texttt{Private notes:}\\
%\vspace{1cm}
Stochastic mean-field approach to fluid dynamics}
\author{Simon Hochgerner}
\email{simon.hochgerner@gmail.com} 
\date{October 28, 2017}
\begin{document}

\begin{abstract}
It is shown that the incompressible Navier-Stokes equation can be derived from an infinite dimensional mean-field stochastic differential equation.
\end{abstract}

\maketitle

%\tableofcontents

\section*{Introduction} 
Let $M=T^n = \R^n/\mathbb{Z}^n$ be the $n$-dimensional torus.
The Navier-Stokes equation for incompressible viscous flow in $M$ with viscosity $\eta$ is
\begin{align*}
    \dd{t}{}u &= -\nabla_u u + \eta\Delta u - \nabla p\\
    \d\,u &= 0\\
    u(0,x) &= u_0(x)
\end{align*}
where $u = u(t,x)$ is the velocity of the fluid and $p=p(t,x)$ is the pressure.

The interpretation of viscous flow as a `sum' of  Euler flow and a stochastic flow has a long history (which we do not attempt to recount here). This philosophy has been applied by \cite{Y83,C,CC07} etc.\ to obtain a characterization of Navier-Stokes solutions as solutions to a stochastic variational problem. It has also led to numerical schemes where the Lie-Trotter product formula is applied to the Euler and the Stokes solution algorithms to obtain approximative solutions to the Navier-Stokes equations (\cite{Mar73,Chor78}). In \cite{Gli} solutions to the Euler equation are stochastically perturbed to construct solutions to the Navier-Stokes problem.  

Our approach \revise{enjoys the following features}:
\begin{enumerate}[\up (1)]
\item
We do not use the solution to any other (such as the Navier-Stokes or Euler) equation.
\item
Stochastic mean field equations are known to generally have non-linear generators, and this feature is used to reproduce the non-linerity in the Navier-Stokes equation.
For example, in \cite[Equation~(16.33)]{Gli} this is achieved through the use of an additional force term (that is subsequently annihilated by means of an additional equation).
\item
\revise{We use a stochastic} perturbation \revise{that} is driven by a cylindrical Wiener process in the infinite-dimensional space of divergence free vector fields (as in \cite{C,CC07}).
\item
The incompressibility condition $\d\,u=0$ is ensured through an extra (stochastic) pressure term. Our evolution equation is not formulated in a space of divergence free vector fields.
\item
In a sense our approach can be viewed as a Hamiltonian version of the Lagrangian formulation in \cite{Y83,C,CC07}. However this analogy has its limitations -- see Section~\ref{sec:Ham}.
\end{enumerate}

It is shown in Theorem~\ref{thm:main} that solutions to the Navier-Stokes equations can be constructed as expectations over solutions to an infinite dimensional mean-field stochastic differential equation. The idea is that the changes in the fluid's velocity field $u_t$ are due to a combination of a mean motion and a stochastic component. 

\revise{It turns out that points (1) and (2) are also satisfied by \cite{CI05}, the second item is stated explicitly. In fact,  the SDEs employed by \cite{CI05} are very similar to ours -- see Section~\ref{sec:lit} for a discussion.
}

\revise{Acknowledgements. The paper has benefited a lot from the two very insightful referee reports. Also I want to thank my colleagues at the FMA for discussing fluid mechanics over lunch time.}

\Section{Heuristics}\label{sec:heuristics}
\revise{The purpose of this section, which is not rigorous, is to serve as a motivation for equation~\eqref{e:sde2} in the next section. We remark that the heuristic equation \eqref{e:stovel} is certainly not new, but appears (in some form) in many works on stochastic fluid dynamics. 
%To this end we shall make a number of assumptions. Note however that none of these assumptions are necessary for the mathematical analysis of equation~\eqref{e:sde2}.   

Let $\gamma(t)$ denote the path of a fluid particle in $M$ and $\rho\xi$ the momentum along $\gamma$ where $\rho$ is the density of the fluid. We assume the density to be constant and set $\rho = 1$.
In line with \cite[Remark~5]{CHF17} we shall call $\xi$ the specific momentum of the fluid although, eventually, $E[\xi]$ is supposed to represent the (macroscopic) velocity of the fluid. In particular, we do not assume that $\dd{t}{}\gamma(t)$ and $\xi(t,\gamma(t))$ coincide.\footnote{The input of one of the referees concerning this point is gratefully acknowledged.}
 
Let the velocity $\dd{t}{}\gamma(t)$ consist of a mean component representing the average flow and a stochastic component that represents the random wandering of fluid particles. It is the latter that is responsible for the transfer of momentum between different layers of the fluid, and thus for turbulence. 
Hence we model the velocity as a sum of a smooth component $u$ and a Gaussian stochastic noise term $\nu\by{\delta W_t}{\delta t}$ such that
\begin{equation}\label{e:stovel}
 \dd{t}{}\gamma(t) = u(t,\gamma(t)) + \nu\frac{\delta W_t}{\delta t}
\end{equation}
with $E[\xi(t)] = u(t)$ and $\nu>0$. More precisely, we pick an orthonormal system of divergence free vector fields $X_{\alpha}$ and a sequence of independent real-valued Wiener processes $W^{\alpha}$ such that $W = \sum W^{\alpha}X_{\alpha}$, which is a cylindrical Wiener process in the space $\gu_0^s$ of divergence free vector fields of Sobolev class $H^s$ (technical details follow in the next section). Since $u$ is supposed to model the macroscopic fluid velocity we implicitly assume that $\dd{t}{}\gamma(t) = \dot{\gamma}(t)$ is divergence free.
}
According to Newton's second law we have
\[
 \by{d}{dt} \xi(t,\gamma(t)) = -\nabla p
\] 
where $-\nabla p$ is the pressure force. 
The usual balance of momentum argument, as in \cite{CM}, yields
\begin{align}\label{e:bom}
  \by{d}{dt} \xi(t,\gamma(t))
  &= \dd{t}{}\xi(t,\gamma(t)) + \vv<\dd{t}{}\gamma(t),\nabla>\xi(t,\gamma(t))\\
  &=
  \dd{t}{}\xi(t,\gamma(t))
  + \vv<u(t,\gamma(t))
  + \nu\frac{\delta W_t}{\delta t},\nabla>\xi(t,\gamma(t))
  \notag\\
  &= \dd{t}{}\xi(t,\gamma(t))
  + \nabla_u\xi(t,\gamma(t))
  + \nu\sum\nabla_{X_{\alpha}}\xi(t,\gamma(t))\, \frac{\delta W_t^{\alpha}}{\delta t}
  \notag\\
  &= -\nabla p
  \notag
\end{align}
%In this calculation the Brownian motion has been decomposed as $W = \sum_{\alpha}W^{\alpha}X_{\alpha}$ where $W^{\alpha}$ are independent copies of $\R$-valued Brownian motion and the divergence free vector fields $X_{\alpha}$ are defined in the next section. 

Now think of $\xi_t = \xi(t,.)$ as a curve in  the space of vector fields on $M$.  Then we can rephrase the balance of momentum equation as the  Stratonovich SDE
\begin{align}\label{e:sde1}
\delta \xi_t
 &=
 -\nabla_{u_t}\xi_t \,\delta t
 - \nu\sum \nabla_{X_{\alpha}}\xi_t \,\delta W_t^{\alpha}
 - \nabla p \,\delta t \\
 \xi_0 &= u_0 \notag \\
E[\xi_t] &= u_t  \notag 
\end{align}
on the infinite dimensional space $\gu^s$ of vector-fields on $M$ of Sobolev class $H^s$ where $s$ will be specified below. 
This SDE depends on the mean $E[\xi_t] = u_t$ of the process, it is therefore a mean-field or McKean-Vlasov SDE.
Notice that \eqref{e:sde1} coincides with the Euler equation for incompressible flow ($\d\,u=0$ will be shown separately) if no stochasticity is present, i.e., $\nu=0$.  

The above equation is not formulated in the space of divergence free vector fields. So $\xi_t$ need not be divergence free. However, one could still hope that its mean satisfies $\d\,u_t=0$. To this end consider the pressure $p = p^a+p^m$ as a sum of an advective pressure $p^a$ due to the bulk motion of particles and another term $p^m$ which we think of as the molecular pressure created by the random wandering of particles. 
Let $PX = X-\nabla f$ be the Helmholtz projection operator such that $\d(PX)=0$. 
Then one should have $-\nabla p^a = (1-P)\nabla_u \xi$. The other term $\nabla p^m$ would then be responsible for annihilating $\d\,u$. It will be shown in the next section that such a function $p^m$ indeed exists whence the force in \eqref{e:bom} can be written as $-\nabla(p^a+p^m)$. Note that $p^a$ and $p^m$ both depend on $\xi_t$, i.e., they are random variables for given $x$ and $t$.
\revise{As stated above, the random fluctuation of individual particles is responsible for the transfer of momentum between individual sections of the fluid. Hence it is perhaps not surprising that $p^m$ will be proportional to $\d\,\xi$.}

If $\xi_t$ is a solution of \eqref{e:sde1} then the corresponding McKean-Vlasov equation for the law $\mu_t$ of $\xi_t$ reads
\[
\dd{t}{}\vv<\mu_t,\phi> = \vv<\mu_t,A^{\mu_t}\phi>
\]
where $A^{\mu_t}$ is the (non-linear) generator of the process $\xi_t$ and $\phi$ a sufficiently smooth function on $\gu^s$.
The bracket means evaluation of the probability measure $\mu$ on the function $\phi$: $\vv<\mu,\phi> = \int_{\gu^s}\phi(\xi)\,d\mu(\xi)$.
If we substitute $\phi = \ev_x: \gu^s\to\R^n$, $\xi\mapsto\xi(x)$ we have
\[
\vv<\mu_t,\ev_x> = E[\ev_x(\xi_t)] = u(t,x).
\]
It will be shown below that the McKean-Vlasov equation corresponding to \eqref{e:sde1} evaluated on $\phi = \ev_x$ yields
\[
\dd{t}{} u
= \dd{t}{} \vv<\mu_t,\ev_x>
= E[A^{\mu_t}\ev_x(\xi_t)]
= -P\nabla_uu + \eta\Delta u
\]
which, since it can be proved that also $\d\,u=0$, is the Navier-Stokes equation for incompressible flow in $M$ and where $\eta$ equals $\by{\nu^2}{2}$ up to a multiplicative constant.
The molecular pressure thus disappears from this equation which is due to the fact that $E[\nabla p^m]=0$ as will be shown in the next section.

\Section{Mean-field approach}
In the last section the Navier-Stokes equation was, heuristically, derived from the balance of momentum equation~\eqref{e:bom}. To make this precise we will start with the mean-field formulation~\eqref{e:sde1}. 

Let
%$M=\R^n$ or   %(dann Einschraenkung auf cp Traeger.....)
$M=T^n=\R^n/\mathbb{Z}^n$. We fix $s>1+n/2$ and let $G$ denote the infinite dimensional $\cinf$-manifold of $H^s$-diffeomorphisms on $M$. 
Similarly, $G_0$ denotes the submanfiold of volume preserving $H^s$-diffeomorphisms. Both, $G$ and $G_0$, are  topological groups but not  Lie groups since left composition is only continuous but not smooth. Right composition is smooth.
The tangent space of $G$ (resp.\ $G_0$) at the identity $e$ shall be denoted by $\gu^s$ (resp.\ $\gu^s_0$). 
Let $\X^s(M)$ denote the vector fields on $M$ of class $H^s$ and $\X_0^s(M)$ denote the subspace of divergence free vector fields of class $H^s$.
We have $\gu_0^s = \X^s_{0}(M)$ and $\gu^s=\X^s(M)$.
See \cite{EM70,MEF}.

We equip $G_0$ with the $H^s$-metric and $\gu_0^s$ with the corresponding $H^s$-inner product. 
We choose an orthonormal system $X_{\alpha}$ of $\gu_0^s$ such that 
\[ 
 \nabla_{X_{\alpha}}X_{\alpha} = \vv<X_{\alpha},\nabla>X_{\alpha} = 0.
\]
Further we shall assume that, for $\xi\in\X(M)$, 
\begin{equation}\label{e:Delta}
 \sum \nabla_{X_{\alpha}}\nabla_{X_{\alpha}}\xi = c^s\Delta\xi
\end{equation}
where $c^s>0$ is a constant and $\Delta$ is the vector Laplacian.
Such an orthonormal system is explicitly constructed in \cite[Appendix]{CS09} (Lemma~20 with $g=e$). 

\revise{We define the (point) evaluation map by  $\ev_x: \gu^s\to\R^n$, $\xi\mapsto\xi(x)$. Since $\gu^s$ is a space of equivalence classes of measurable functions one could repeat the usual comments about this map being defined on a representative. Also, $\ev_x$ is defined on each $\gu^s$ in the same way, irrespectively of the smoothness class $H^s$.}

The following lemma plays the role of \cite[Theorem~2.2]{CC07} or \cite[Lemma~10]{CS09}. However, the statement is much simpler since we do not say anything about generators of diffusions.  

\begin{lemma}\label{lem:1}
Consider $\hat{X_{\alpha}}: \gu^s\to\gu^{s-1}$, $\xi\mapsto \nabla_{X_{\alpha}}\xi$. Then
\[
 \sum\hat{X_{\alpha}}\hat{X_{\alpha}}(\ev_x)(\xi) = c^s\Delta\xi.
\]
\end{lemma}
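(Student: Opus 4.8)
The plan is to treat this as a short computation in the differential calculus of linear operators on the Sobolev scale $\gu^s$, the crucial input being that $\ev_x$ is linear.

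First I would record two elementary facts. Each $\hat{X_{\alpha}}$ is a bounded linear map $\gu^s\to\gu^{s-1}$ and hence defines a smooth (linear) vector field on $\gu^s$ with value $\nabla_{X_{\alpha}}\xi$ at the point $\xi$; acting as a first-order differential operator on a Fr\'echet-differentiable function $\phi$ it reads $(\hat{X_{\alpha}}\phi)(\xi)=D\phi(\xi)\cdot\nabla_{X_{\alpha}}\xi$. And since $\ev_x$ is a bounded linear map, its derivative at every point equals $\ev_x$ itself while its second derivative vanishes.

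Then I would simply iterate. One application gives
\[
 (\hat{X_{\alpha}}\ev_x)(\xi)=D(\ev_x)(\xi)\cdot\nabla_{X_{\alpha}}\xi=(\nabla_{X_{\alpha}}\xi)(x),
\]
so that $\hat{X_{\alpha}}\ev_x=\ev_x\circ\nabla_{X_{\alpha}}$ is again linear in $\xi$. Applying $\hat{X_{\alpha}}$ once more, the term that would involve the Hessian of $\ev_x$ drops out — this is the only place linearity of $\ev_x$ enters — and one is left with
\[
 (\hat{X_{\alpha}}\hat{X_{\alpha}}\ev_x)(\xi)=(\nabla_{X_{\alpha}}\nabla_{X_{\alpha}}\xi)(x).
\]
Summing over $\alpha$ and invoking \eqref{e:Delta} yields $\sum(\hat{X_{\alpha}}\hat{X_{\alpha}}\ev_x)(\xi)=(c^s\Delta\xi)(x)$, which is the assertion.

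The only delicate points are analytic rather than algebraic, and I expect them to be the (mild) obstacle. Each application of $\hat{X_{\alpha}}$ costs a derivative, so the identity lives on the scale $\gu^s\supset\gu^{s-1}\supset\gu^{s-2}$; for the pointwise evaluations to be literally meaningful one reads $\ev_x$ on each $\gu^{s'}$ as in the remark preceding the lemma (or restricts to smooth $\xi$ and argues by density), noting that $c^s\Delta\xi\in\gu^{s-2}$. Convergence of the series $\sum\hat{X_{\alpha}}\hat{X_{\alpha}}\ev_x$ is not a separate issue: it is precisely what \eqref{e:Delta} asserts for the explicit orthonormal family $X_{\alpha}$ of \cite[Appendix]{CS09}. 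Finally I would emphasise, as the statement already hints, that there is no It\^o-type correction here because the lemma concerns the bare second-order operator $\sum\hat{X_{\alpha}}\hat{X_{\alpha}}$, not the generator of a diffusion.
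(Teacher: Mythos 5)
Your argument is correct and follows essentially the same route as the paper: use the linearity of $\ev_x$ to compute the Gateaux derivative along $\hat{X_{\alpha}}$ (obtaining $\ev_x\circ\nabla_{X_{\alpha}}$), iterate, and conclude with \eqref{e:Delta}. Your additional remarks on the loss of derivatives across the Sobolev scale and on the absence of an It\^o correction are consistent with the paper's own caveat that $\hat{X_{\alpha}}$ is not a proper vector field on $\gu^s$.
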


\begin{proof}
The directional derivative of $\ev_x$ along $\hat{X_{\alpha}}$ is not necessarily well-defined, since the latter is not a proper vector field on $\gu^s$ but takes values in $\gu^{s-1}$. However, since $\ev_x$ is linear the (Gateaux) derivative is
\[
 \hat{X_{\alpha}}(\ev_x)(\xi) 
 = d\ev_x(\hat{X_{\alpha}})(\xi)
 = \ddo\ev_x(\xi + t\hat{X_{\alpha}}(\xi))
 = \nabla_{X_{\alpha}}\xi(x)
\]
which clearly exists.
Similarly,
\begin{align*}
    \sum\hat{X_{\alpha}}\hat{X_{\alpha}}(\ev_x)(\xi)
    &= \sum d\big(\eta\mapsto\nabla_{X_{\alpha}}\eta(x)\big)(\xi).\hat{X_{\alpha}}(\xi)\\
    &= \sum\ddo\nabla_{X_{\alpha}}\big(\xi+t\hat{X_{\alpha}}(\xi)\big)(x)\\
    &= \sum\nabla_{X_{\alpha}}\nabla_{X_{\alpha}}\xi(x)\\
    &= c^s\Delta\xi(x)
\end{align*}
where we make use of \eqref{e:Delta}.
\end{proof}

Let $(\Om,\F,(\F_t)_{t\in[0,T]},P)$ be a filtered probability space satisfying the usual assumptions. In the following, Brownian motion shall be understood to be adapted to this filtration. 

The following proposition (where the second statement is only cited for general information) is well-known -- see e.g.\ \cite{CC07,C,DZ}. 

\begin{proposition}
  The following are true.
  \begin{enumerate}[\up (1)]
  \item
    Let $W_t = \sum X_{\alpha} W_t^{\alpha}$, where $W_t^{\alpha}$ are independent copies of Brownian motion in $\R$. Then $W$ defines (a version of) Brownian motion (or cylindrical Wiener process) in $\gu_0^s$.
  \item
    If $\gamma_t$ satisfies $\delta \gamma_t = T_e R_{\gamma_t}.\delta W_t$, then $\gamma_t$ is Brownian motion in $G_0^s$ with respect to the induced right-invariant $H^s$-metric.
  \end{enumerate}
\end{proposition}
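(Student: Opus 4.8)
The plan is to handle the two statements separately: (1) is a convergence statement inside the Hilbert space $\gu_0^s$, while (2) transports that construction to the manifold $G_0$ by right translation.

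For (1), I would first record that, $M = T^n$ being compact and $s > 1 + n/2$, the space $\gu_0^s = \X_0^s(M)$ is a separable Hilbert space for the $H^s$ inner product, and that the system $(X_\alpha)$ from \cite[Appendix]{CS09} is in fact a complete orthonormal system, i.e.\ an orthonormal basis (this completeness is implicit already in \eqref{e:Delta}). By the definition of a cylindrical Wiener process on a separable Hilbert space $\Ham$ with identity covariance --- namely the formal series $\sum_\alpha e_\alpha \beta_t^\alpha$ over an orthonormal basis $(e_\alpha)$ and independent standard real Brownian motions $\beta^\alpha$ --- the process $W_t = \sum X_\alpha W_t^\alpha$ is by construction such a process in $\gu_0^s$. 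To realise it as an honest path-valued object I would fix $s'$ with $s' < s - n/2$; then, by Rellich's theorem on $T^n$, the inclusion $\gu_0^s \hookrightarrow \gu_0^{s'}$ is Hilbert--Schmidt, so the partial sums $\sum_{\alpha \le N} X_\alpha W_t^\alpha$ form a Cauchy sequence in $L^2(\Om; C([0,T];\gu_0^{s'}))$ and converge to a $\gu_0^{s'}$-valued continuous process with independent, stationary, Gaussian increments whose covariance is the canonical inclusion $\gu_0^s \hookrightarrow \gu_0^{s'}$; this is precisely what one means by (a version of) Brownian motion in $\gu_0^s$. All of this is standard, and I would cite \cite[Ch.~4]{DZ} together with \cite{C,CC07}.

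For (2), the equation $\delta\gamma_t = T_e R_{\gamma_t}.\delta W_t$ is a Stratonovich SDE on the Hilbert manifold $G_0$ driven by the process from (1). Since right composition on $G_0$ is smooth (as recalled above), the maps $\tilde X_\alpha \colon g \mapsto T_e R_g(X_\alpha)$ are smooth vector fields, and $\{\tilde X_\alpha\}$ is a right-invariant orthonormal frame for the right-invariant $H^s$-metric. I would argue that $\gamma_t$ exists and is non-exploding by the SDE theory on diffeomorphism groups of \cite{EM70} used in \cite{C,CC07}, and then identify its generator: converting the Stratonovich equation to It\^o form produces a drift correction of the form $\tfrac12 \sum \nabla_{\tilde X_\alpha}\tilde X_\alpha$ along $\gamma_t$, where $\nabla$ is the Levi-Civita connection of the right-invariant metric; by right invariance this is the right translate of $\tfrac12\sum\nabla_{X_\alpha}X_\alpha$, which vanishes by our choice of $(X_\alpha)$. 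Hence the generator of $\gamma_t$ is $\tfrac12\sum\tilde X_\alpha^2 = \tfrac12\big(\Delta_{G_0} + \sum\nabla_{\tilde X_\alpha}\tilde X_\alpha\big) = \tfrac12\Delta_{G_0}$, the Laplace--Beltrami operator of the right-invariant metric, so $\gamma_t$ is by definition Brownian motion on $G_0$.

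The main obstacle --- and the reason the proposition is ``well-known'' rather than proved in detail here --- is the infinite-dimensional functional analysis: for (1) one must be careful about the weaker Sobolev space in which the series actually converges, and for (2) one must justify both the It\^o--Stratonovich conversion and the solvability of the SDE on $G_0$, which is merely a topological group (left translation is not smooth), so the finite-dimensional Lie-group picture is not directly available and one must work with the right-invariant structure and the Ebin--Marsden analytic framework. Since (2) is used only for orientation, I would give full details only for (1) and state (2) with references.
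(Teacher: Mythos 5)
The paper does not actually prove this proposition: it is stated as well-known, the second item is explicitly ``only cited for general information'', and the reader is referred to \cite{CC07,C,DZ}. Your proposal is therefore more detailed than the source, and what you write is essentially the standard argument from those references, so in substance you are on the paper's intended route. For (1) your construction is correct: completeness of the $(X_\alpha)$ from \cite[Appendix]{CS09}, the formal series definition of a cylindrical Wiener process with identity covariance on $\gu_0^s$, and realisation of the paths in a larger space $\gu_0^{s'}$ with $s-s'>n/2$ so that the embedding is Hilbert--Schmidt; this is exactly the \cite[Ch.~4]{DZ} package. For (2) your sketch is acceptable at the level of rigour the paper itself adopts, but two points are glibber than they should be. First, existence and non-explosion are obtained more directly than by appealing to an abstract SDE theory on $G_0$: since the $X_\alpha$ are smooth divergence-free fields on the compact torus, $\gamma_t$ is just the stochastic flow of the Stratonovich SDE $\delta\gamma_t(x)=\sum X_\alpha(\gamma_t(x))\,\delta W_t^\alpha$ on $M$, which exists globally and stays in $G_0$ because the $X_\alpha$ are divergence free; this is how \cite{CC07,CS09} proceed. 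Second, identifying the generator $\tfrac12\sum\tilde X_\alpha^2$ with ``the'' Laplace--Beltrami operator of the right-invariant $H^s$-metric is delicate in infinite dimensions (the metric is only a weak Riemannian metric and the infinite sum defining a Laplacian needs interpretation), which is precisely why in the cited works Brownian motion on the group is in effect \emph{defined} as the right-translated process you wrote down, with $\sum\nabla_{X_\alpha}X_\alpha=0$ guaranteeing the absence of a drift correction. Flagging that definitional subtlety, rather than asserting the Laplace--Beltrami identification outright, would make your (2) match the literature exactly; as it stands your argument is sound for the purpose the proposition serves in the paper.
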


Let us introduce the Helmholtz projection operator (see \cite[Corollary 1.4.4]{MEF}):
Consider a $H^s$ vector field $X$ on $M$. Then there is a unique divergence free vector field $Y$ of class $H^s$ and a function $f$ on $M$ such that $X=Y+\nabla f$. Setting $Y=PX$ thus defines a bounded linear operator $P: \X^s(M)\to\X_0^s(M)$ (for arbitrary $s\ge0$).
      
Motivated by \eqref{e:sde1} we consider, for $\eta>0$ and $\nu=\sqrt{\by{2\eta}{c^s}}$,  the mean-field Stratonovich equation in $\gu^s$
\begin{align}\label{e:sde2}
 \delta \xi_t 
 &= - \Big(P\nabla_{u_t}\xi_t + \eta\nabla\d\,\xi_t \Big)\,\delta t
    - \nu \sum \nabla_{X_{\alpha}}\xi_t \,\delta W_t^{\alpha}\\
 \xi_0 &= u_0  \notag \\
 u_t &= E[\xi_t] \notag
\end{align}
where $u_0\in\gu_0^s$ is the initial condition. Note that this is an equation in all of $\gu^s$. Its solutions are not necessarily divergence free.  
The term $-\eta\nabla\d\,\xi_t$ cannot be guessed from \eqref{e:sde1}. Its role will become clear below. Note that $\nabla\d\,\xi = (1-P)\Delta\xi$.

% Also we remark that the requirement $s>2+n/2$ was made since we want 
% $\nabla\d\,\xi_t$ to be smooth (in $x$) -- usually in the context of the  %Navier-Stokes equations it suffices to assume that $s>1+n/2$. 
 
\begin{definition}[Strong solution]
A $\gu^s$-valued stochastic process $\xi_t$ with $t\in[0,T]$ is a strong solution to the mean-field or McKean-Vlasov Stratonovich SDE 
\[
 \delta \xi_t = f(\xi_t,\mu_t)\,\delta t + g(\xi_t)\,\delta W_t
\] 
if
\begin{enumerate}[\up (1)]
\item
$\mu_t$ is the law of $\xi_t$, i.e., $\mu_t = (\xi_t)_*P$,
\item
$\xi_t$ is adapted to $(\F_t)$,
\item
$t\mapsto\xi_t$ is continuous $P$-a.s.,
\item
$\xi_t 
 = \xi_0 + \int_0^t f(\xi_s,\mu_s) \,ds 
   + \int_0^t g(\xi_s)\,\delta W_s $ 
for all $t\in[0,T]$ $P$-a.s.
\end{enumerate}
\end{definition}

Notice that, once the prescription $t\mapsto\mu_t$ of the law of $\xi_t$ is found, the concept of a mean field Stratonovich SDE is not different from that of a time-dependent Stratonovich SDE. The Stratonovich integral above is hence, as usually, to be understood as 
\[
 \int_0^t g(\xi_s)\,\delta W_s
 = \int_0^t g(\xi_s)\, d W_s
 + \by{1}{2}[g(\xi),W]_t.
\]
See e.g.\ \cite[Page~82]{Pro}.

%\begin{proposition}[Existence of strong solutions]
%\end{proposition}

\begin{theorem}\label{thm:main}
If $\xi_t\in\gu^s$ is a strong solution to \eqref{e:sde2} on $[0,T]$ such that $\xi_0=u_0\in\gu_0^s$ then $u(t,x) = E[\ev_x(\xi_t)]$ satisfies the Navier-Stokes equation with $u(0,x)=u_0(x)$ for incompressible flow in $M$ on $[0,T]$:
\begin{align*}
    \dd{t}{} u &= -P\nabla_u u + \eta\Delta u \\
    \d\,u &= 0
\end{align*}
\end{theorem}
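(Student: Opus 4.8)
The plan is to derive the McKean--Vlasov (forward Kolmogorov) equation for the law $\mu_t$ of $\xi_t$, evaluate it on the linear test functional $\phi = \ev_x$, and then take expectations to obtain a deterministic PDE for $u(t,x) = E[\ev_x(\xi_t)]$. First I would convert the Stratonovich equation~\eqref{e:sde2} into its It\^o form. The correction term is $\tfrac12[g(\xi),W]_t$ with $g(\xi) = -\nu\sum\nabla_{X_\alpha}\xi\cdot(\,\cdot\,)$; since $\xi\mapsto\nabla_{X_\alpha}\xi$ is linear, the quadratic variation bracket produces $\tfrac{\nu^2}{2}\sum\nabla_{X_\alpha}\nabla_{X_\alpha}\xi_t\,dt$, which by \eqref{e:Delta} equals $\tfrac{\nu^2}{2}c^s\Delta\xi_t\,dt = \eta\Delta\xi_t\,dt$ (using $\nu = \sqrt{2\eta/c^s}$). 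Hence the It\^o form of the drift is $-P\nabla_{u_t}\xi_t - \eta\nabla\d\,\xi_t + \eta\Delta\xi_t = -P\nabla_{u_t}\xi_t + \eta P\Delta\xi_t$, where I used $\nabla\d\,\xi = (1-P)\Delta\xi$, so $\eta\Delta\xi - \eta\nabla\d\,\xi = \eta P\Delta\xi$.

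Next, since $\ev_x$ is linear, the It\^o formula applied to the process $\ev_x(\xi_t) = \xi_t(x)$ is especially simple: $d(\ev_x(\xi_t)) = d\ev_x(d\xi_t) = (\text{It\^o drift})(x)\,dt + (\text{martingale part})(x)$, with no second-order It\^o correction beyond the one already absorbed into the drift above (the Stratonovich-to-It\^o passage and It\^o's formula for a \emph{linear} functional coincide). Taking expectations kills the stochastic integral, and using $E[\xi_t] = u_t$ together with linearity of $P$, $\nabla$, $\d$, $\Delta$ and the fact that $\nabla_{u_t}(\,\cdot\,)$ is deterministic given $u_t$, I get
\begin{align*}
 \dd{t}{}u(t,x) &= E\big[-P\nabla_{u_t}\xi_t(x) + \eta P\Delta\xi_t(x)\big]
 = -P\nabla_{u_t}u_t(x) + \eta P\Delta u_t(x).
\end{align*}
Here I should be careful that $\nabla_{u_t}\xi_t$ is genuinely linear in $\xi_t$ with $u_t$ frozen, so $E[\nabla_{u_t}\xi_t] = \nabla_{u_t}E[\xi_t] = \nabla_{u_t}u_t = \nabla_u u$.

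It remains to establish the incompressibility $\d\,u_t = 0$. The idea is to track $\d\,\xi_t$: applying $\d$ to the It\^o form of~\eqref{e:sde2} and using that $\d$ commutes with $\nabla_{X_\alpha}$ (the $X_\alpha$ being divergence free with $\nabla_{X_\alpha}X_\alpha=0$), that $\d P = 0$, and that $\d\nabla\d\,\xi = \Delta\d\,\xi$, one obtains a closed linear (stochastic) evolution equation for the scalar field $\d\,\xi_t$ of the form $\delta(\d\,\xi_t) = -\eta\Delta(\d\,\xi_t)\,\delta t - \nu\sum\nabla_{X_\alpha}(\d\,\xi_t)\,\delta W_t^\alpha$ (in Stratonovich form; in It\^o form the Laplacian terms cancel, leaving $d(\d\,\xi_t) = -\nu\sum\nabla_{X_\alpha}(\d\,\xi_t)\,dW_t^\alpha$, a martingale). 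Since $\d\,\xi_0 = \d\,u_0 = 0$ and the right-hand side is linear and homogeneous in $\d\,\xi$, uniqueness for this linear SDE forces $\d\,\xi_t \equiv 0$ for all $t$, $P$-a.s.; \emph{a fortiori} $\d\,u_t = \d\,E[\xi_t] = E[\d\,\xi_t] = 0$. (This also retroactively shows the term $p^m$ of the Heuristics section, proportional to $\d\,\xi$, vanishes here, consistent with $E[\nabla p^m]=0$.) The main obstacle is the rigor of the infinite-dimensional It\^o calculus: justifying It\^o's formula for $\ev_x$ on $\gu^s$ when $\hat X_\alpha$ only maps $\gu^s\to\gu^{s-1}$ (so one loses a derivative at each application), interchanging $E$ with the unbounded operators $\nabla$, $\Delta$, and interchanging $\d$ with the infinite sum and the stochastic integral. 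Lemma~\ref{lem:1} is exactly the tool that handles the first of these loss-of-derivative issues for the second-order term; the remaining interchanges should follow from the strong-solution hypothesis (the integrability built into Definition~(Strong solution) and continuity of $t\mapsto\xi_t$) plus boundedness of $P$ on every $\X^s$.
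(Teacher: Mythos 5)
Your main computational route is the same as the paper's: convert \eqref{e:sde2} to It\^o form (the correction $\tfrac{\nu^2}{2}\sum\nabla_{X_\alpha}\nabla_{X_\alpha}\xi=c^s\tfrac{\nu^2}{2}\Delta\xi=\eta\Delta\xi$ via \eqref{e:Delta}), exploit linearity of $\ev_x$ (this is exactly what Lemma~\ref{lem:1} is for), take expectations to kill the martingale part, and handle incompressibility separately; your closing identification $\eta P\Delta u=\eta\Delta u$ is legitimate once $\d\,u=0$ is known, and is the remark the paper makes right after the theorem.

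The genuine gap is in your incompressibility argument. You assert that $\d$ commutes with $\nabla_{X_\alpha}$ because $\d\,X_\alpha=0$ and $\nabla_{X_\alpha}X_\alpha=0$; this is false in general: in coordinates $\d(\nabla_{X_\alpha}\xi)=\nabla_{X_\alpha}(\d\,\xi)+\sum_{i,j}(\del_j X_\alpha^i)(\del_i\xi^j)$, and the trace term does not vanish because the orthonormal system of \cite[Appendix]{CS09} consists of non-constant (Fourier-mode) fields. Consequently you do not obtain a closed homogeneous SDE for the scalar $\d\,\xi_t$, and the conclusion $\d\,\xi_t\equiv0$ $P$-a.s.\ is unjustified --- indeed it contradicts the paper's explicit statement that solutions of \eqref{e:sde2} are \emph{not} divergence free and that the sole effect of the $\eta\nabla\d\,\xi_t$ term is to make $\d\,\xi_t$ a martingale. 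Fortunately the theorem only needs the weaker fact $\d\,u_t=E[\d\,\xi_t]=0$, and your own correct computation already delivers it without any commutation: the divergence of the It\^o drift vanishes identically ($\d\,P(\cdot)=0$ and $-\eta\,\d\,\nabla\d\,\xi+\eta\,\d\,\Delta\xi=-\eta\Delta\d\,\xi+\eta\Delta\d\,\xi=0$), while the stochastic integral (with the correct integrand $\d(\nabla_{X_\alpha}\xi_s)$, not $\nabla_{X_\alpha}\d\,\xi_s$) has zero expectation; hence $\d\,u_t=\d\,u_0=0$. This is precisely the paper's argument, so your proof is repaired by deleting the commutation step and the claim $\d\,\xi_t\equiv0$ and keeping only the expectation/martingale statement.
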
 

Note that there are no boundary conditions on $M$ and $\d\,u=0$, whence $P\Delta u = \Delta Pu = \Delta u$.

\begin{proof}
Let us write the Ito version of \eqref{e:sde2} as $d\xi_t = b(u_t,\xi_t)\,dt - \nu\sum\nabla_{X_{\alpha}}\xi_t\,dW_t^{\alpha}$.
We have then for the quadratic variation
\begin{align*}
\sum\Big[-\nu\nabla_{X_{\alpha}}\xi, W^{\alpha}\Big]_t 
&= -\nu\sum\Big[
    \int_0^{\cdot}\nabla_{X_{\alpha}} b(u_s,\xi_s)\, ds 
    -
    \nu\int_0^{\cdot}\nabla_{X_{\alpha}}\nabla_{X_{\beta}}\xi_s\,dW_s^{\beta},
    W^{\alpha}
 \Big]_t \\
& = 
\nu^2\sum\delta_{\alpha\beta}\int_0^t          \nabla_{X_{\alpha}}\nabla_{X_{\beta}}\xi_s\;ds \\
&= 
c^s\nu^2\int_0^t \Delta\xi_s\, ds
\end{align*}
where $\delta_{\alpha\beta}$ is the Kronecker delta.
(See \cite{Pro} or \cite[Section~3.4]{DZ}.) 
Therefore, the Ito version of \eqref{e:sde2} is
\begin{equation}\label{e:sed2-ito}
d \xi_t 
 = \Big(-P\nabla_{u_t}\xi_t - \eta\nabla\d\,\xi_t 
         + c^s\by{\nu^2}{2}\Delta\xi_t \Big)\,dt
    - \nu \sum \nabla_{X_{\alpha}}\xi_t \, dW_t^{\alpha}
\end{equation}
To see that 
\[
 u_t = u_0 
     + E\Big[\int_0^t\Big(-P\nabla_{u_s}\xi_s - \eta\nabla\d\,\xi_s 
         + \eta\Delta\xi_s \Big)\,ds \Big]
\] 
is divergence free we note that $\d\,u_0 = 0$ by assumption and the integrand is sufficiently smooth such that 
\begin{align*}
    \d\,u_t
    &= E\Big[\int_0^t \d \Big(-P\nabla_{u_s}\xi_s - \eta\nabla\d\,\xi_s 
         + \eta\Delta\xi_s \Big)\,ds \Big] \\
    &=  E\Big[\int_0^t \Big( 0 - \eta\Delta\d\,\xi_s 
         + \eta\d \Delta\xi_s \Big)\,ds \Big] \\
    &= 0.
\end{align*}

As in Lemma~\ref{lem:1}, for $t\in[0,T]$, consider the $\gu^{s-1}$-valued vector fields  $\hat{X_{\alpha}}$.
Additionally we define the time dependent $\gu^{s-2}$-valued vector field
$\hat{X_t}$ on $\gu^s$ by
$\hat{X_t}(\xi) = - P\nabla_{u_t}\xi - \eta\nabla\d\,\xi$.

Because of the Ito formula applied to $\ev_x: \gu^s\to\R^n$ the process $\ev_x(\xi_t)$ satisfies
\begin{align*} 
E\Big[\ev_x(\xi_t)\Big]
    & = u_0(x) +
        E\Big[\int_0^t\Big(\hat{X_s} + \by{\nu^2}{2}\sum\hat{X_{\alpha}}\hat{X_{\alpha}}\Big)(\ev_x)(\xi_s)\,ds\Big] \\
    &= u_0(x) +
        E\Big[\int_0^t\Big(\hat{X_s}(\ev_x)(\xi_s) + c^s\by{\nu^2}{2}\Delta\xi_s(x)\Big)\,ds\Big]
\end{align*}
for all $x\in M$ and where we have used Lemma~\ref{lem:1}.
By linearity $\hat{X_t}(\ev_x)(\xi) = d\ev_x(\hat{X_t})(\xi) = \ev_x(\hat{X_t}(\xi)) = -P\nabla_{u_t}\xi(x) - \eta\nabla\d\,\xi(x)$.
Hence, for $u(t,x) = u_t(x) = E[\ev_x(\xi_t)]$, 
\begin{align*}
 \dd{t}{} u(t,x) 
 % &= -P\nabla_{u_t}u_t(x) - \by{\nu^2}{2}\d\,u_t(x)
 %    + \by{\nu^2}{2}\sum \hat{X_{\alpha}}\hat{X_{\alpha}} u_t (x)\\
 &=
  -P\nabla_{u_t}E[\xi_t](x)
  - \eta\nabla\d\,E[\xi_t](x)
  + \eta\Delta E[\xi_t](x)
 = -P\nabla_{u_t}u_t(x) + \eta\Delta u_t(x)
\end{align*}
since $\d\,u_t = 0$.
\end{proof}

The above proof shows that, while $\xi_t$ is not divergence free, the effect of the  $\eta\nabla\d\,\xi_t$ term is to make $\d\,\xi_t$ into a martingale. 
Returning to the discussion of the previous section we have thus found $p^m = \eta\d\,\xi_t$, up to a constant.

\Section{Conclusions and outlook}
\revise{\subsection{Comparison with existing literature}\label{sec:lit}
While the derivation of the Navier-Stokes equation via equation~\eqref{e:sde2} seems to be new, the results and also the interpretation as a stochastic mean-field system are not new.\footnote{It is gratefully acknowledged that one of the referees pointed out \cite{CI05} (which was unknown to the author).}
Indeed, Constantin and Iyer~\cite{CI05} explicitly mention the non-linearity in the sense of McKean in their work. Furthermore, they 
\retwo{define the stochastic vector $w = A'\otimes(u_0\circ A)$
where $A=X^{-1}$, $X=X(t,x)$ is the Lagrangian flow map, i.e.\ a volume preserving diffeomorphism for each $t$, and the divergence free vector field $u_0$ is the initial condition. 
We make use of the notation $A'\otimes v = \sum(\del_i A^j)v^j e_i = (\nabla^t A)v$ where $(e_i)$ is the standard basis in $\R^3$.
Let $B$ denote three-dimensional Brownian motion. 
In \cite[Theorem~2.2]{CI05} it is assumed that the pair $(X,u)$ satisfies
\begin{align*}
    dX &= u\,dt + \sqrt{2\eta}\,dB\\
    u &= E[Pw]\\
    X(0,x) &= x
\end{align*}
and conluded that $u$ is a solution of the incompressible Navier-Stokes equation with initial data $u_0$ and viscosity coefficient $\eta$.
In order to do so they}
derive the system
\begin{equation}
\tag{\cite[Equation~4.5]{CI05}}
dw_t = \Big(-\nabla_{u_t} w_t + \eta \Delta w_t - u_t'\otimes w_t\Big)\,dt + \sqrt{2\eta}\nabla w_t\,d\retwo{B}_t
\end{equation}
where $(w,u)$ play the role of our $(\xi,u)$. 
%We use the notation $u'\otimes w = \sum(\del_i u^j)w^j e_i = (\nabla^t u)w$ where $(e_i)$ is the standard basis in $\R^3$.
Apart from the fact that this equation is finite dimensional it looks very much like \eqref{e:sed2-ito} above. Basically the $X_{\alpha}$ are replaced by $e_i$. Notice also that $w$, just like $\xi$ above, is not divergence free.  
Now, \cite{CI05} construct the mean Eulerian velocity as $u = PE[w]$. In order to be comparable to our equation~\eqref{e:sed2-ito} we should consider $u = E[Pw]$, which is completely equivalent. Thus
\begin{align*}
    u_t 
    &= u_0 + 
    E\Big[
     \int_0^t P\Big( -\nabla_{u_s} w_s + \eta\Delta w_s - u_s'\otimes w_s\Big)\,dt
    \Big].
\end{align*}
Let us look at the individual terms. The first, $-P\nabla_{u_t} w_t$, is obvious. The second is $P\eta\Delta w_t = \eta\Delta w_t - \eta\nabla\d\,w$, which we recognize as-well. 
For the third we note that 
\[
 \retwo{\Delta}\vv<u,w> = 
 \vv<\Delta u,w> + 2\tr(u'\otimes w') + \vv<u,\Delta w>
\]
and 
\[
 \d(u'\otimes w) = \vv<\Delta u, w> + \tr (u'\otimes w').
\]
Hence $EP(u'\otimes w) = PE(u'\otimes w) = u'\otimes u - \by{1}{2}\nabla\vv<u,u>$, and the last term appears just below \cite[Equation~4.5]{CI05} as it should. However, more is true: if we add the stochastic gradient $-\by{1}{2}\nabla\vv<u,w>$ to the above equation we obtain 
\[
 u'\otimes w - \by{1}{2}\nabla\vv<u,w>
 = \by{1}{2}\Big(w'\otimes u - u'\otimes w\Big)
\]
which certainly vanishes upon taking expected values.  
Thus the drift terms of \cite[Equation~4.5]{CI05} and \eqref{e:sed2-ito} agree up to a gradient and a remainder that vanishes in the mean. Of course, otherwise the respective expectations could not satisfy the same equation. 
Thus our approach differs/agrees with that of Constantin and Iyer \cite{CI05} in the following aspects:
\begin{enumerate}[\up (1)]
\item
We use a (Stratonovich) infinite dimensional set-up via a cylindrical Wiener process as opposed to finite dimensional Brownian motion.
\item
The vanishing divergence of the expectation $u$ is a direct consequence of the equation for $\xi$. (However, as shown above, the approach \cite{CI05} approach can easily be modified to yield the same feature.)
Further, the drifts of the respective SDEs are very similar, albeit not identical.
\item
Both papers use a stochastic mean-field formulation.
\item
The respective physical interpretations are also very similar. One proceeds via a stochastic Lagrangian particle formulation and an averaged Weber formula, the other via a stochastic Newton equation (along stochastic particle paths). 
\end{enumerate}
}

\subsection{Hamiltonian approach}\label{sec:Ham}
Let $\gu=\X(M)$ denote the infinite dimensional Frechet space of $\cinf{}$ vector fields on $M$. This space has a natural Lie-Poisson structure and the Hamiltonian vector field associated to the function $\xi\mapsto\vv<u,\xi>$ is $-\nabla_u\xi$.
In a sense our approach can thus be viewed as a Hamiltonian version of the Lagrangian formulation in \cite{Y83,C,CC07}. 
However, this analogy has its limitations since, instead of $\gu$, we are dealing with $\gu^s$ and \eqref{e:sde2} is not an equation with respect to any (natural) Lie-Poisson structure on $\gu^s$: 
\begin{itemize}
    \item
    $\nabla_{X_{\alpha}}\xi\in\gu^{s-1} \neq \gu^s$;
    \item
    $P\nabla_{u}\xi\in\gu_0^{s-1}\neq \gu^s$.
\end{itemize}
\revise{Stochastic Hamiltonian versions of the Navier-Stokes and other equations of fluid dynamics equation are also contained with more details in \cite{AB10,E10,ED16,Rez}.}
\retwo{In fact, \cite{AB10} present a stochastic Lagrangian formulation of the Navier-Stokes equation that is very similar to that of \cite{CI05}. In \cite[Section~6]{AB10}, and along the lines of \cite{C00}, it is further shown how the Newton equation ties to the Weber formula used by \cite{CI05}. This point of view thus also provides a link between our Section~\ref{sec:heuristics} and the stochastic Weber formula approach. 
}

\subsection{Independent particle approximation}
We refer to \cite{AD95,Gov} and the references contained therein for further information on stochastic McKean-Vlasov or mean field equations in a Hilbert space setting. 

Let us write equation~\eqref{e:sed2-ito}, which is the Ito version of our basic equation~\eqref{e:sde2}, 
as $d\xi_t = b(\xi_t,\mu_t)\,dt + \sigma(\xi_t)\,dW_t$
where $\mu_t$ is the law of $\xi_t$.
This equation has an approximation by a system of independent particles. For $k=1,\dots,N$
consider the coupled stochastic evolution equations
\begin{align}\label{e:ips1}
d\xi_t^{k,N}
&= b(\xi_t^{k,N}, \mu_t^N)\,dt + \sigma(\xi_t^{k,N})\,dW_t\\
\mu_t^N &= \by{1}{N}\sum_{j=1}^N \delta_{\xi_t^{j,N}} \notag\\
\xi_{0}^{k,N} &= u_0 \notag
\end{align}
where $\delta_{\xi}$ is the Dirac distribution in $\gu^s$ centered at $\xi$. The measure $\mu_t^N$ is called the \emph{empirical measure}. 

The results of \cite{AD95,Gov} cannot directly be applied to the system \eqref{e:ips1} since $b$ and $\sigma$ do \emph{not} take values in $\gu^s$. (See also the comments in Section~\ref{sec:locsol}.)
Let $M_{\lam^2}(\gu^s)$ denote an appropriate subset of the set of probability measures on $\gu^s$ as defined in \cite{AD95}. Then, if $b: \gu^s\times M_{\lam^2}(\gu^s)\to\gu^s$ and $\sigma:\gu^s\to\gu^s$ satisfy certain Lipschitz and  linear growth conditions, it follows from loc.\ cit.\ that $\mu_t^N$ converges in probability, as $N\to\infty$, to the (deterministic) probability measure $\mu_t$ which is the law of the solution $\xi_t$ to \eqref{e:sed2-ito}.   
Moreover, $\xi_t$ is the limit (in the appropriate sense) of $\xi_t^{1,N}$ as $N\to\infty$.

One way to make all of this precise could be to employ the mollification technique of \cite{MB03} to $b$ and $\sigma$. This will be a topic for future research.

\subsection{Existence of local solutions}\label{sec:locsol}
The local existence results of \cite{AD95,Gov} cannot be applied to \eqref{e:ips1} since this system contains unbounded operators. The situation is very similar to case of the Navier-Stokes or Euler equations which also do not allow for a direct application of the Picard theorem on Banach spaces. 

Local existence and uniqueness of solutions to the incompressible Navier-Stokes and Euler equations are shown in \cite[Section~3]{MB03} by means of mollification and energy bounds. It would be interesting to carry out the same programme for the mean-field equation \eqref{e:sde2} or the system \eqref{e:ips1}. One of the challenges would be to find the appropriate stochastic energy bounds. 
\revise{As a matter of fact this has actually been accomplished by \cite{I06a,I06b} for the Weber formula approach of \cite{CI05}.}

\subsection{Numerics}
It is an obvious idea to apply one of the existing numerical schemes for stochastic mean-field equations (see e.g.\ \cite{delmoral}) to the system \eqref{e:ips1} in order to construct numerical solutions to \eqref{e:sde2}. By construction this approach yields a Monte-Carlo method for approximate solutions to the Navier-Stokes equations:
\begin{enumerate}[\up (1)]
\item
Take a large $N$.
\item
Discretize time and space to approximately solve \eqref{e:ips1}.
\item
Take the sample average over the approximate solution for $\xi_t^{1,N}$.
\end{enumerate}
It is of course not clear at all whether this approach would have any advantage compared to existing algorithms.  
In particular it would be interesting to numerically study the effect of the $\eta\nabla\d\,\xi_t$ term.

\subsection{Physical interpretation}
The derivation of the incompressible Navier-Stokes equations via equation~\eqref{e:sde2} and Theorem~\ref{thm:main} is, to the best of our knowledge, new. It is interesting to consider the physical meaning of the individual terms in equation~\eqref{e:sde2}:
\begin{itemize}
    \item 
    The mean-field $u_t$ appears in the advection term $P\nabla_{u_t}\xi_t$. This is reasonable as advection does not take place for a single particle but is always a property of bulk motion. Thus the interpretation of the mean-field is that of the independent particle system \eqref{e:ips1} such that individual fluid particles are transported by the stream of other particles.
    \item
    On the other hand, \revise{momentum is transferred between adjacent layers of fluid due to the random wandering of individual particles. This is modelled by the $\nu\nabla_{X_{\alpha}}\xi_t\,\delta W_t^{\alpha}$ term.
    This stochastic transfer of momentum has the consequence that $\xi_t$ is not divergence free. However, the exchange of momentum results in an additional pressure term $-\eta\nabla\d\xi_t$ whose effect is to cancel the divergence in the mean: $E[\d\,\xi_t] = \d\,u_t = 0$.} 
    \item
    The stochasticity in \eqref{e:sde2} is not just due to Brownian motion in the space dimensions but given by a Wiener process in the space of divergence free vector fields. This means that we have a stochastic perturbation by all possible velocity fields. Again, this makes sense, since the idea is that the perturbation is due to particles which have wandered from different velocity layers. 
\end{itemize}

\end{document}